\setlist{leftmargin=*}
\newcounter{sarrow}
\numberwithin{equation}{section}
\newtheoremstyle{corsivo}
   {\medskipamount}{\medskipamount}%
   {\itshape}{}%
   {\bfseries}{}%
   { }
   {\thmname{#1}\thmnumber{\@ifnotempty{#1}{ }\@upn{#2}}%
    \thmnote{ {\bfseries(#3)}}.}%
\theoremstyle{corsivo}
\newtheorem{thm}{Theorem}[section]
\newtheorem{prop}[thm]{Proposition}
\newtheoremstyle{dritto}
   {\medskipamount}{\medskipamount}%
   {\rmfamily}{}%
   {\bfseries}{}%
   { }
   {\thmname{#1}\thmnumber{\@ifnotempty{#1}{ }\@upn{#2}}%
    \thmnote{ {\bfseries(#3)}}.}%
\theoremstyle{dritto}
\newtheorem{dfn}[thm]{Definition}
\newtheorem{rmk}[thm]{Remark}
\newcommand{\V}{P}
\newcommand{\eu}{\mathrm{e}}
\newcommand{\iu}{\mathrm{i}}   
\newcommand{\di}{\mathrm{d}}
\newcommand{\Hol}{\text{Hol}}
\newcommand{\N}{\mathbb{N}}
\newcommand{\Z}{\mathbb{Z}}
\newcommand{\R}{\mathbb{R}}
\newcommand{\C}{\mathbb{C}}
\newcommand{\inner}[2]{\left\langle #1\,|\, #2 \right\rangle}  
\newcommand{\norm}[1]{\left\| #1 \right\|}
\newcommand{\set}[1]{ \left\{  #1 \right\}} 
\DeclareMathOperator{\Tr}{Tr}         
\DeclareMathOperator{\Ran}{Ran}
\DeclareMathOperator{\dist}{dist}
\newcommand{\ie}{{\sl i.\,e.\ }}   
\newcommand{\virg}[1]{``#1''}
\newcommand{\half}{\mbox{\footnotesize $\frac{1}{2}$}}
\newcommand{\w}{\psi}  
\newcommand{\jp}[1]{\langle#1\rangle}
\newcommand{\Lattice}{\mathfrak{D}}
\newcommand{\x}{{\bf x}}
\newcommand{\z}{{\bf z}}
\newcommand{\y}{{\bf y}}
\let\oldfootnote\footnote
\renewcommand{\footnote}[1]{\oldfootnote{\  #1}}
\title[Ultra-generalized Wannier Bases and topological transport]
{ Ultra-generalized Wannier bases: \\[2mm] are they relevant to topological transport? }
\author{Massimo Moscolari and Gianluca Panati}
\date{ \today. Final version for arXiv. Paper published in J. Math. Phys. 64, 071901 (2023). DOI: 10.1063/5.0137320
}
\begin{document}
	
\maketitle
\vspace{-1cm}

\begin{abstract}
We generalize Prodan's construction of radially localized generalized Wannier bases [E. Prodan, On the generalized Wannier functions. {\it J. Math. Phys.} {\bf 56}(11), 113511 (2015)] to gapped quantum systems without time-reversal symmetry, including in particular magnetic Schr\"odinger operators, and we prove some basic properties of such bases. We investigate whether this notion might be relevant to topological transport by considering the explicitly solvable case of the Landau operator.
\end{abstract}

\section{Introduction}

Wannier functions, and their generalizations, are nowadays a fundamental tool in solid-state physics \cite{WannierReview}.  
Whenever a basis of well-localized generalized Wannier functions exists, it allows computational methods whose cost scales only linearly with respect to the system size \cite{Goedecker}, as well as an intuitive understanding of polarization and orbital magnetization in solids \cite{CeresoliThonhauserVanderbiltResta2006}. 

A few years ago, it has been noticed that Wannier bases can be used to \virg{detect} topological phases of matter, in the sense that they allow to discriminate between ordinary and Chern insulators. In other words, Wannier bases are special orthonormal bases for the range of the Fermi projection of a gapped quantum system, that are able to distinguish whether the Chern number of the projection is vanishing or not. This follows from the so-called \emph{Localization Dichotomy},  initially stated and proved for gapped $\Z^d$-periodic systems \cite{MonacoPanatiPisanteTeufel2018}, $d=2$ or $3$:  
\renewcommand{\labelenumi}{{\rm(\roman{enumi})}}
\begin{enumerate} 
\item  either there exists a composite Wannier basis whose elements are exponentially localized in space and, correspondingly, the Chern number of the Fermi projection vanishes;
\item  or any composite Wannier basis is such that the expectation value of the squared position operator 
diverges, in the sense that
$$
\sup_{\gamma \in \Gamma_0, 1 \leq a \leq m} \int_{\R^d} \norm{\x - \gamma}^2  | w_{\gamma,a} (\x)|^2 \, \di \x  = + \infty 
$$
where $\Gamma_0 \simeq \Z^d$ is a Bravais lattice. 
\end{enumerate} 

In view of such importance of Wannier bases, it is much desirable to have a corresponding object that can be defined also for non-periodic systems. Indeed, since the early '70, several works have been devoted to the subject. We just mention here the pioneering works of Kohn and Onffroy \cite{KohnOnffroy1973} and Kivelson \cite{Kivelson1982} for non-periodic one-dimensional systems, and the more mathematically oriented works by A. Nenciu and G. Nenciu \cite{NeNe93,NeNe98}. 
Following this stream of ideas, in \cite{MarcelliMoscolariPanati, Moscolari2018} the notion of generalized Wannier basis for any orthogonal projection has been formalized. Without entering into the details of the definition, we say, for example, that an orthogonal projection $P$ acting on $L^2(\R^d)$, $d\geq 1$, admits a \emph{generalized Wannier basis} (GWB) that is exponentially localized if there exist a discrete set $\Lattice \subset \R^d$, a  constant $m_*>0$ and a set $\{\w_{\gamma,a}\}_{\gamma \in \Lattice, 1 \leq a \leq m(\gamma)}\subset L^2(\R^d)$ with $m(\gamma)\leq m_*$ for every $\gamma \in \Lattice$, such that:
	\begin{enumerate}[label=(\roman*),ref=(\roman*)]
		\item $\{\w_{\gamma,a}\}_{\gamma \in \Lattice, 1 \leq a \leq m(\gamma)}$ is an orthonormal basis
		for $\Ran P$;
		\item the functions $\w_{\gamma,a}$ are \emph{uniformly} exponentially localized around the points of $\Lattice$, 
		\ie there exist $\alpha>0$ and $M<\infty$ such that
		\begin{equation*}
		\int_{\R^d} |\w_{\gamma,a}(\x)|^2 \eu^{2 \alpha \|\x-\gamma\|} \,  d\x \leq M \quad \forall\gamma \in \Lattice,\,1 \leq        
		a \leq  m(\gamma).
		\end{equation*}
	\end{enumerate}

In \cite{MarcelliMoscolariPanati} it is shown that GWBs can be used to investigate topological and transport properties of  non-periodic gapped quantum systems. In particular, the Fermi projections that admit an exponentially localized (or just a well-localized \cite{MarcelliMoscolariPanati}) GWB with a  \emph{uniformly discrete} set $\Lattice$, are Chern trivial in the sense that their Chern character is zero. As well known, the Chern character is proportional to the Hall conductivity, thus showing that topological quantum transport and well-localized generalized Wannier bases cannot coexist.

This point of view has been pushed forward and generalized in several directions. Ludewig and Thiang  \cite{LudewigThiang2019} considered systems which are periodic with respect to a suitable non-abelian discrete group, while Bourne and Mesland generalized \cite{MarcelliMoscolariPanati}  to a broader $C^*$-algebraic setting \cite{BourneMesland}. 
Lu and Stubbs enlarged the class of well-localized GWB for which the Chern character of the Fermi projection vanishes
\cite{LuStubbs21I, LuStubbs21II}.  Finally, Ludewig and Thiang realized that \emph{Wannier localizability} is a property of the closed subspaces of a Hilbert space $L^2(X)$ for suitable $X$ (\ie not only of the spectral subspaces of a Schr\"odinger-type operator) corresponding to the triviality of the corresponding orthogonal projection in the $K$-theory of the Roe $C^*$-algebra of $X$ \cite{LudewigThiang2022}, thus paving the way to further developments.
     
Following a parallel and independent line of thought, Prodan constructed, for the spectral subspaces of gapped time-reversal-symmetric Schr\"odinger operators,  a  \emph{radially localized} sort of generalized Wannier basis \cite{Prodan}, which hereafter 
we call Ultra-Generalized Wannier basis (UGWB) to avoid any risk of confusion with the definition of GWB recalled above.
    
In this  paper we first show in Section \ref{sec:ProdanUGWB} that Prodan's construction \cite{Prodan} of an UGWB can be extended to any gapped quantum system, without assuming time-reversal symmetry (Theorem~\ref{TheoremBProdan}), and we prove some additional properties of such bases (Proposition~\ref{prop:InfIDS}). At a first look, the previous Theorem seems to contradict the Localization Dichotomy mentioned above, as the existence of a UGWB is unrestricted, up to minor technical assumptions on the kernel of the corresponding projector, which are typically satisfied by the spectral projections of magnetic Schr\"odinger operators.
However, even though UGWB might be useful to analyze systems with crystalline defects of particular forms as emphasized in \cite{Prodan}, the crucial point is that UGWB are not capable to encode the transport properties of physical systems. Indeed, in Section \ref{Sec:Landau} we consider the explicit example of the Landau operator:  for the orthogonal projection $P_n$ on the $n$-th Landau level, whose Chern number is well-know to be $1$ (up to a sign convention), we explicitly construct an UGWB, elaborating on a result of Raikov and Warzel \cite{RaikovWarzel}. As a consequence, it appears that the existence of an exponentially localized UGWB does not encode relevant information about the transport or topological properties of the physical system.

By contrast, the definition of GWB, while very general and independent of periodicity, still contains relevant topological information as, under the additional assumption that the set $\Lattice$ is uniformly discrete, the existence of a well-localized GWB implies the Chern triviality of the corresponding projection \cite{MarcelliMoscolariPanati}.

\section{Prodan's Ultra-Generalized Wannier Bases}
\label{sec:ProdanUGWB}
The construction of a generalized Wannier basis in dimension $d=1$ is based on the spectral theory of the reduced position operator $\widetilde{X}=P X P$ \cite{Kivelson1982, NeNe98} (see also the more recent generalization to quasi-one dimensional systems \cite{CorneanNenciuNenciu}),  where $P$ is the projection onto an isolated component of the spectrum of a one-dimensional Schr\"odinger operator of the type $-\Delta + V$. In particular, the eigenvalues and eigenvectors of $\widetilde{X}$ are interpreted as points in the position space $\R$ and, respectively, generalized Wannier functions for the spectral projection $P$. While the operator $X$ is an unbounded operator whose spectrum is purely absolutely continuous and covers the entire real line $\R$, the projection of the action of $X$ onto the spectral subspace associated to $P$ creates discrete spectrum \cite{NeNe98}.
Since $P^2=P$, we have that $\varphi \in P L^2(\R)$ is an eigenvector for $\widetilde{X}$ if and only if $\varphi$ is in the kernel of $P \left(X-\lambda\right) P$. Therefore, in the range of $P$, one interprets the eigenvalues of $\widetilde{X}$ as points in the space, since
$$
XP \varphi = \lambda P \varphi + \varphi_{\perp} \,
$$
where $ P \varphi_{\perp} = 0$. This fundamental idea is behind  both the construction of the one-dimensional generalized Wannier basis and the ultra-generalized Wannier basis. Notice that the same argument holds true if we consider $f(X)$ in place of $X$. However, it is necessary for $f$ to be invertible in order to recover a true lattice from the spectrum of the operator $f(X)$.

It is well-known that in $d>1$ the operator $P X_j P$ is not necessarily compact. However, if one considers a suitable function of the position operators, namely $f(X_1,\dots,X_d)=f(\bf{X})$, it is possible to overcome the compactness problem. This is exactly the simple but successful key idea in the paper by Prodan. We notice that more recently there has been another proposal by Stubbs, Lu and Watson to overcome such lack of compactness under further spectral assumptions on the operator $P X_j P$ (namely the uniformity of spectral gaps), see  \cite{StubbsWatsonLu1,StubbsWatsonLu2}. 

In \cite{Prodan} the author considers Schr\"odinger operators of the type $-\Delta + V$, namely only non-magnetic Schr\"odinger operators. The proof in \cite{Prodan} is based on Combes-Thomas estimates and the trace class properties of operators which are of the form $(-\Delta-z)^{-1}f(\mathbf{X})$, and it can be easily generalized to the magnetic case, see Remark~\ref{MagneticProdan}. However, instead of doing so, we take here a slightly different route. We extend Prodan's construction to the case of orthogonal projections that have an integral kernel that decays sufficiently fast. As it is well-known, in dimension $d\leq 3$, spectral projections onto an isolated component of the spectrum of a  \virg{reasonable} Schr\"odinger operator  have an integral kernel that is exponentially localized, see for example \cite{MarcelliMoscolariPanati} and the references therein. Since the argument in \cite[Proposition 2.4]{MarcelliMoscolariPanati} is only sketched, we give in Appendix \ref{sec:CombesThomas} a short proof for the sake of completeness.

\begin{rmk}
\label{MagneticProdan}
 In this remark we briefly explain how the proof in \cite{Prodan} can be generalized to the case of magnetic Schr\"odinger operators. First, the proof of the optimal Combes-Thomas norm estimates given in \cite{Prodan} is based on the results presented in \cite{BaCoHi}, where the magnetic field is already taken into account. Then, it is not difficult to get the optimal Combes-Thomas norm estimates also in the magnetic case. Furthermore, by exploiting the diamagnetic inequality one can show that $(-\Delta_A-z)^{-1}f(\bf{X})$ is compact (where $\Delta_A$ denotes the magnetic Laplacian) whenever $(-\Delta-z)^{-1}f(\bf{X})$ is compact, see for example \cite{AvHeSi78}. By using these two facts, one can generalize the proof \cite{Prodan} to magnetic Schr\"odinger operators.
\end{rmk}

As anticipated, we consider in this paper a general setting. Let us start with a few definitions.

\begin{dfn}[Localization function]
We say that a continuous function \[G\colon[0,+\infty)\to[1,+\infty)\] is a \emph{localization function} if $\lim_{x \to\infty}G(x)=+\infty$ and there exists a constant $C_G>0$ such that
\begin{equation}
\label{eqn:GTriang}
G(\norm{\x-\y})\leq C_G \,G(\norm{\x-\z})G(\norm{\z-\y})\qquad \forall \,\x,\y,\z\in\R^d.
\end{equation}
For $G$ as above, we say that a measurable function $f: \R^d \to \C$ is $G$-localized if the function 
$\x \mapsto G(\norm{\x})f(\x)$, hereafter denoted by $Gf$, is in $L^2(\R^d)$.
\end{dfn}

\begin{dfn}[$G$-localized projection]
	\label{dfn:ELP}
	We say that an orthogonal projection $P$ acting on $L^2(\R^d)$ is \emph{$G$-localized} if $P$ 
	is an integral  operator with a measurable integral kernel $P(\cdot\,,\,\cdot)\colon\R^d\times\R^d \to \C$ and 
	there exists a $G$-localized function $g: \R \to \R$ 
	such that
	\begin{equation}
	\label{DiagLocIntKErnel}
	\left|P(\x,\y)\right|\leq g(\|\x-\y\|) \qquad \forall \, \x,\y \in\R^d.
	\end{equation}
	Furthermore, we say that $P$ is exponentially localized with rate $\beta$ if there exist two constants $C,\beta>0$ such that $g(\|\x-\y\|)\leq C \eu^{-\beta\|\x-\y\|}$  for all  $\x,\y \in\R^d$.
\end{dfn}

Notice that we do not need any regularity, {\it e.g.} continuity, of the integral kernels, what matters is the decay at infinity of the kernels. We are now ready to state our main result.

\begin{thm}[Generalization of \cite{Prodan}]
	\label{TheoremBProdan}
	Let $P$ be an orthogonal projection that is G-localized in the sense of Definition \ref{dfn:ELP}. 
	For $f: \R^d \to \R$ positive and $G$-localized, let $W_f$ be the operator
	\begin{equation}
		\label{DefinitionPrOp}
		W_f:=P f(\mathbf{X})P \,.
	\end{equation}
	Then:
\begin{enumerate} [label=(\roman*),ref=(\roman*)]
		\item $W_f$ is a non-negative Hilbert-Schmidt operator, 
		hence its spectrum consists of positive eigenvalues of finite
		 multiplicity that can possibly accumulate at zero, and zero. 
		 Moreover, every eigenfunction $\psi_\lambda$ corresponding to a positive eigenvalue $\lambda$ is $G$-localized, namely  $G \, \psi_\lambda \in L^2(\R^d)$.
		\item \label{UGWB} Let $\jp{\x}:=\sqrt{1+\|\x\|^2}$. Set $f(\x)=\eu^{-q\jp{\x}}$ for some $q>0$, and assume that $P$ is exponentially localized with rate $\beta>q$. Let $\left\{\lambda_i,\{\psi_{i,j}\}_{ j \leq m_i < \infty}\right\}_{i\in \N}$ be the set of eigenpairs for $W_f$, with eigenvalues $\left\{\lambda_i \right\}_{i \in \N}$ ordered decreasingly. Define 
		\begin{equation}
		r_i:=\sqrt{\left(\frac{\ln(\lambda_i)}{q}\right)^2-1}.
		\end{equation} 
		Then, all the eigenvectors decay exponentially at infinity with a rate $q$ 
		and are radially localized in the sense that  $\exists \, M \in \R$ such that
		\begin{equation}
		\label{eq:Expq}
			\sup_{i,j} \int_{\R^d} \eu^{q |\jp{\x}-\jp{r_i} |} |\psi_{i,j}(\x)|^2 \di\x \leq M.
		\end{equation}
\end{enumerate}
\end{thm}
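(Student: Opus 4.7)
My plan is to treat the two parts separately. For part (i), I would factorize $W_f = (P f(\mathbf{X})) \circ P$: the operator $P f(\mathbf{X})$ has integral kernel $P(\x,\y) f(\y)$, which, via $|P(\x,\y)| \leq g(\|\x-\y\|)$ and the translation substitution, has Hilbert--Schmidt norm bounded by $\|g(\|\cdot\|)\|_{L^2(\R^d)} \cdot \|f\|_{L^2(\R^d)}$; both factors are finite since $G \geq 1$ forces $g(\|\cdot\|), f \in L^2(\R^d)$. Non-negativity is immediate from $\langle \psi, W_f \psi\rangle = \int f \, |P\psi|^2 \geq 0$, so the spectral theorem for compact self-adjoint operators yields the claimed structure. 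For the $G$-localization of an eigenvector $\psi_\lambda$ with $\lambda > 0$, I start from the pointwise identity $\psi_\lambda(\x) = \lambda^{-1}\int P(\x,\y) f(\y)\psi_\lambda(\y)\,\di\y$, multiply by $G(\|\x\|)$, and apply the quasi-submultiplicative bound $G(\|\x\|) \leq C_G\, G(\|\x-\y\|) G(\|\y\|)$ from \eqref{eqn:GTriang}. The right-hand side becomes a convolution of $Gg$ with $Gf\cdot|\psi_\lambda|$, so Cauchy--Schwarz together with Young's inequality gives $\|G\psi_\lambda\|_{L^2} \leq \lambda^{-1} C_G \|Gg\|_{L^2} \|Gf\|_{L^2} < \infty$.

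For part (ii), the key algebraic observation is that the definition of $r_i$ is equivalent to $\lambda_i = \eu^{-q\jp{r_i}}$ (valid because $\lambda_i \leq \|f\|_\infty = \eu^{-q} < 1$, so $\jp{r_i}$ is real). The central analytic ingredient is that, under the hypothesis $q < \beta$, the conjugated projection
\begin{equation*}
A \;:=\; \eu^{q\jp{\mathbf{X}}}\, P\, \eu^{-q\jp{\mathbf{X}}}
\end{equation*}
is a bounded operator on $L^2(\R^d)$: its kernel $\eu^{q(\jp{\x}-\jp{\y})} P(\x,\y)$ is dominated by $C \eu^{-(\beta-q)\|\x-\y\|}$, using the $1$-Lipschitz estimate $|\jp{\x}-\jp{\y}| \leq \|\x-\y\|$, so a Schur-test argument gives $\|A\| < \infty$. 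Writing the eigenvalue equation in the form $\eu^{q\jp{\mathbf{X}}}\psi_{i,j} = \lambda_i^{-1}\, A\, \psi_{i,j}$ and substituting $\lambda_i^{-1} = \eu^{q\jp{r_i}}$ produces the \emph{uniform} estimate
\begin{equation*}
\int_{\R^d} \eu^{2q(\jp{\x}-\jp{r_i})} |\psi_{i,j}(\x)|^2\,\di\x \;\leq\; \|A\|^2,
\end{equation*}
which both furnishes the pointwise exponential decay at rate $q$ and controls \eqref{eq:Expq} on the outer region $\{\jp{\x} \geq \jp{r_i}\}$, where $\eu^{q|\jp{\x}-\jp{r_i}|} \leq \eu^{2q(\jp{\x}-\jp{r_i})}$.

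For the inner region $\{\jp{\x} < \jp{r_i}\}$ I would use the complementary expectation-value identity arising directly from $W_f \psi_{i,j} = \lambda_i\psi_{i,j}$ and $P\psi_{i,j} = \psi_{i,j}$:
\begin{equation*}
\int_{\R^d} \eu^{-q\jp{\x}}\,|\psi_{i,j}(\x)|^2\,\di\x \;=\; \langle \psi_{i,j}, f(\mathbf{X})\psi_{i,j}\rangle \;=\; \langle \psi_{i,j}, W_f \psi_{i,j}\rangle \;=\; \lambda_i \;=\; \eu^{-q\jp{r_i}},
\end{equation*}
which, after multiplying by $\eu^{q\jp{r_i}}$, becomes $\int_{\R^d} \eu^{q(\jp{r_i}-\jp{\x})}|\psi_{i,j}(\x)|^2\, \di\x = 1$. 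Since on the inner region $\jp{r_i}-\jp{\x} = |\jp{\x}-\jp{r_i}|$, this contribution to \eqref{eq:Expq} is bounded by $1$, and adding the two regions yields the desired $M = \|A\|^2 + 1$, independent of $(i,j)$. The only genuinely technical obstacle is the Schur-test bound on $A$; conceptually, everything is driven by the precise match between the $\lambda_i^{-1}$ blow-up in the eigenvalue equation and the shift $\jp{r_i}$ built into the definition of $r_i$, which allows two unrelated inequalities to cover complementary regions with constants that are exactly reciprocal.
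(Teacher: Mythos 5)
Your proposal is correct and follows essentially the same route as the paper. The only cosmetic difference is in part (ii): you estimate the operator norm of $A = \eu^{q\jp{\mathbf{X}}}P\eu^{-q\jp{\mathbf{X}}}$ via a Schur test, whereas the paper reuses the Hilbert--Schmidt bound on the same operator (written there as $G(\|\mathbf{X}\|)Pf(\mathbf{X})$) already established in part (i); both yield the uniform outer-region estimate, and your inner-region identity $\int \eu^{q(\jp{r_i}-\jp{\x})}|\psi_{i,j}(\x)|^2\,\di\x = 1$ is exactly the paper's.
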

\begin{proof} The proof of Theorem~\ref{TheoremBProdan} basically follows the argument of \cite{Prodan}, with the exception of step (i) which is considerably simplified in our setting.
\begin{enumerate} [label=(\roman*),ref=(\roman*)]
	\item Since $P$ is a $G$-localized projection, we have that $W_f$ is an integral operator with integral kernel given by
	$$
	W_{f}(\x,\y):= \int_{\R^d}  P(\x,\x') f(\x') P (\x',\y) \di \x'.
	$$ 
	Moreover, $W_f$ is the product of three bounded operators, hence also bounded. Considering the integral kernel of $ P f(\mathbf{X}) $, one has 
	\begin{equation}
	\label{eq:E-qP}
	\begin{aligned}
	&|\left( P f(\mathbf{X}) \right)(\x,\y)|  \leq |\left(G(\|\mathbf{X}\|) P f(\mathbf{X}) \right)(\x,\y)| \\
	&\leq C_G G(\|\x-\y\|) g(\|\x-\y\|) f(\y) G(\|\y\|)  \, ,
	\end{aligned}
	\end{equation}
	where we used that $P$ is $G$-localized, \eqref{eqn:GTriang} and that $G(\|\x\|)\geq 1$ for every $\x \in \R^d$. 
	The estimate \eqref{eq:E-qP} implies that the integral kernel of $ P f(\mathbf{X})$ is in $L^2(\R^d \times \R^d)$, 
	thus $P f(\mathbf{X}) $ is a Hilbert-Schmidt operator. Since Hilbert-Schmidt operators are an ideal, $W_f$ 
	is also a Hilbert-Schmidt operator.
	Then, let $\psi_{\lambda}$ be a normalized eigenvector, $W_f \psi_\lambda=\lambda \psi_\lambda$ 
	for $\psi_\lambda = P \psi_\lambda$. We have that
	\begin{equation}
		\lambda\;=\; \inner{\psi_\lambda}{W_f \psi_\lambda} \;=\; \int_{\R^d} f(\x) |\psi_\lambda(\x)|^2 \di\x >0     \,.
	\end{equation}
	After that, consider the integral kernel of the operator $G(\|\mathbf{X}\|)Pf(\mathbf{X})$. As a by-product of the second inequality of \eqref{eq:E-qP}, we get that $G(\|\mathbf{X}\|)Pf(\mathbf{X})$ is a Hilbert-Schmidt operator.  Thus, we have
	\begin{equation}
	\label{EstimateNorm}
	\begin{aligned}
	&\left(\int_{\R^d} |(G \psi_\lambda)(\x)|^2 \di \x\right)^{\frac{1}{2}} 
	=  \lambda^{-1} \left\| G(\|\mathbf{X}\|) W_f \psi_\lambda\right\|_2  \\
	&= \lambda^{-1} \left\| G(\|\mathbf{X}\|) P 	f(\mathbf{X}) \psi_\lambda \right\|_2 \leq \lambda^{-1} \left\| G(\|\mathbf{X}\|) P 	f(\mathbf{X})\right\|_{HS} \, ,
	\end{aligned}
	\end{equation}
	where $\left\| G(\|\mathbf{X}\|) P 	f(\mathbf{X})\right\|_{HS}$ denotes the Hilbert-Schmidt norm and we have used that $\left\| G(\|\mathbf{X}\|) P 	f(\mathbf{X})\right\| \leq \left\| G(\|\mathbf{X}\|) P 	f(\mathbf{X})\right\|_{HS}$.
	Therefore the eigenfunctions of $W_f$ are such that $G \psi_\lambda \in L^2(\R^d) $.

	\item Here the proof follows the strategy used in \cite{Prodan}. We have that $r_i$ is a sequence of positive numbers which increases monotonically to infinity. Let $\psi_{i,j}$ be an eigenvector for $W_f$ relative to the eigenvalue $\lambda_i$. Then, by choosing $G(x)=\eu^{q \jp{x}}$ in \eqref{EstimateNorm} we get that
	\begin{equation}
	 \int_{\R^d} \eu^{2q(\jp{\x}-\jp{r_i})} |\psi_{i,j}(\x)|^2 \di \x \leq \left\| G(\|\mathbf{X}\|) P 	f(\mathbf{X})\right\|_{HS}^2 \,.
	\end{equation}
	On the other hand
	\begin{equation}
		1\;=\; \lambda_{i}^{-1} \inner{\psi_{i,j}}{W_f\psi_{i,j}} \;= \int_{\R^d} \eu^{q(\jp{r_i}-\jp{\x})} |\psi_{i,j}(\x)|^2 \di\x \, .
	\end{equation}
	Thus, we have
		\begin{equation}
		\begin{aligned}
		\int_{\R^d} \eu^{q|\jp{\x}-\jp{r_i}|} |\psi_{i,j}(\x)|^2 \di \x &= \int_{ \R^d} \chi_{\jp{\x}\geq \jp{r_i}} (\x) \eu^{q(\jp{\x}-\jp{r_i})} |\psi_{i,j}(\x)|^2 \di \x  \\
		&\phantom{=}+ \int_{ \R^d} \chi_{\jp{\x} < \jp{r_i}}(\x) \eu^{q(\jp{r_i}-\jp{\x})} |\psi_{i,j}(\x)|^2 \di \x  \\
		& \leq 	\int_{ \R^d} \eu^{2q(\jp{\x}-\jp{r_i})} |\psi_{i,j}(\x)|^2 \di \x + \int_{ \R^d}  \eu^{q(\jp{r_i}-\jp{\x})} |\psi_{i,j}(\x)|^2 \di \x \\
		&\leq \left\| G(\|\mathbf{X}\|) P 	f(\mathbf{X})\right\|_{HS}^2 +1 .
		\end{aligned}
	\end{equation}
	By defining $M=\left\| G(\|\mathbf{X}\|) P 	f(\mathbf{X})\right\|_{HS}^2 +1$, we obtain the radial localization of the theorem.
\end{enumerate}
\end{proof}

\begin{dfn}
\label{def:UGWB}
We call \emph{Ultra-Generalized Wannier Basis} (UGWB) for the range of $P$ the set of eigenfunctions $\{\psi_{i,j}\}_{1 \leq j \leq m_i < \infty; i\in \N}$ of the operator $P\eu^{-q\jp{\x}}P$ described in Theorem~\ref{TheoremBProdan}.\ref{UGWB}. In particular, the functions $\psi_{i,j}$, $1 \leq j \leq m_i < \infty$, are radially localized around the sphere of radius $r_i$ in the sense of \eqref{eq:Expq}.
\end{dfn}

\medskip
\begin{rmk}
Consider $W_f$ with  $f(\x)=\eu^{-q\jp{\x}}$ like in Theorem \ref{TheoremBProdan}.\ref{UGWB}. Since $f$ depends on the norm of $\x$ it is not possible to uniquely associate to the spectrum of $W_f$ a $d$-dimensional lattice. Nevertheless, it is possible to identify a sequence of concentric $d$-dimensional spheres, around which each ultra-generalized Wannier function is concentrated. Although on the one hand this particular localization shape can be useful for some radially symmetric problems \cite{Prodan}, on the other hand the radial localization clearly breaks the translation symmetry, that is, even in the case of a periodic system the UGWB cannot be built by acting with the translation group on a finite set of functions.
\end{rmk}

\medskip

In \cite{MarcelliMoscolariPanati} the authors showed that if a projection $P$ admits an exponentially localized generalized Wannier basis localized around a uniformly discrete set $\Gamma$, then the Chern character of the projection is zero. 
The proof in \cite{MarcelliMoscolariPanati} relies on two important ingredients: (i) the fact that the generalized Wannier functions are (uniformly) localized around a uniformly discrete set and, (ii) the existence of an upper bound on the number of generalized Wannier functions localized around each point $\gamma \in \Gamma$. As it has been already pointed out in \cite{Prodan}, the construction in Theorem \ref{TheoremBProdan} does not give much information on the structure of the spectrum of $W_f$, in particular it might be that there is no upper bound on the dimension of the eigenspaces associated 
with each $\lambda_i$,  or it might happen that the radii of the annuli are not a uniformly discrete subset of $\R_{+}$. 
In the next proposition we show that the previous situation corresponds to the generic case:  if $P$ has a non-vanishing trace per unit volume, then either the set of radii $\set{r_i} \subset \R_+$ is not uniformly discrete,  or there is no upper bound on the multiplicity of the eigenfunctions of $W_f$, \ie  there is no upper bound on the number of ultra-generalized Wannier functions localized around a certain $d$-dimensional annulus of radius $r_i$.

\begin{prop}
\label{prop:InfIDS}
Let $P$ be an exponentially localized projection with rate $\beta$ in $L^2(\R^d)$, for $d \geq 2$.
Assume that $P$ admits an UGWB in the sense of Definition~\ref{def:UGWB}. Moreover, assume that $\sup_{i} m_i = m_*<+\infty$ and that $\inf_{i,j}|{r_i}-{r_j}|=\delta>0$. Then, $\chi_L P$ is a trace class operator and the trace per unit volume of $P$ is equal to zero, that is
$$
\lim_{L \to \infty} \frac{\Tr(\chi_{\Lambda_L}P)}{|\Lambda_L|} = 0,
$$
where $\chi_{\Lambda_L}$, for $L >0$, denotes the characteristic function of the set $\Lambda_L:=[-L,L]^d$ and $|\Lambda_L|$ is the $d$-dimensional volume of $\Lambda_L$.
\end{prop}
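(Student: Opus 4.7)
The plan is to expand the trace in the UGWB $\{\psi_{i,j}\}$ and to split the contributions according to whether the concentration radius $r_i$ is comparable to $L\sqrt{d}$ (the radius of the sphere circumscribing $\Lambda_L$) or much larger. Since $\{\psi_{i,j}\}_{i\in\N,\,1\leq j\leq m_i}$ is an orthonormal basis of $\Ran P$, writing $\chi_{\Lambda_L}P=\sum_{i,j}\chi_{\Lambda_L}|\psi_{i,j}\rangle\langle\psi_{i,j}|$, I would derive both the trace-class property of $\chi_{\Lambda_L}P$ and the identity
\begin{equation*}
\Tr(\chi_{\Lambda_L}P) \;=\; \sum_{i,j}\int_{\Lambda_L}|\psi_{i,j}(\x)|^2\,\di\x
\end{equation*}
as by-products of the absolute convergence of $\sum_{i,j}\|\chi_{\Lambda_L}\psi_{i,j}\|_2$ (each rank-one summand having trace norm $\|\chi_{\Lambda_L}\psi_{i,j}\|_2\leq 1$).

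For the bulk regime $r_i\leq 2L\sqrt{d}$, the uniform separation $\inf_{i\neq j}|r_i-r_j|\geq\delta$ bounds the number of such indices by $2L\sqrt{d}/\delta+1$; together with $m_i\leq m_*$ and $\|\chi_{\Lambda_L}\psi_{i,j}\|_2\leq 1$, this yields a contribution of order $L$. For the tail regime $r_i>2L\sqrt{d}$, any $\x\in\Lambda_L$ satisfies $\|\x\|\leq L\sqrt{d}\leq r_i/2$, so from the elementary inequalities $\jp{r_i}\geq r_i$ and $\jp{\x}\leq\|\x\|+1$ one gets $\jp{r_i}-\jp{\x}\geq r_i/2-1$. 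Combining this with the radial localization estimate \eqref{eq:Expq} of Theorem~\ref{TheoremBProdan}.\ref{UGWB} gives
\begin{equation*}
\int_{\Lambda_L}|\psi_{i,j}(\x)|^2\,\di\x \;\leq\; M\,\eu^{-q(r_i/2-1)},
\end{equation*}
and summing over the $\delta$-separated $r_i$'s in $(2L\sqrt{d},+\infty)$ produces a geometric series whose sum is of order $\eu^{-qL\sqrt{d}}$, in particular bounded uniformly in $L$.

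Adding the two bounds yields $\Tr(\chi_{\Lambda_L}P)=O(L)$ as $L\to\infty$, hence $\Tr(\chi_{\Lambda_L}P)/|\Lambda_L|=O(L^{1-d})\to 0$, the hypothesis $d\geq 2$ entering only at this last step. The main difficulty is not any single analytic estimate but the careful bookkeeping: choosing the cutoff between bulk and tail so that the uniform-discreteness count and the radial-localization exponential cooperate, correctly passing between $\jp{\,\cdot\,}$ and $\|\cdot\|$, and simultaneously justifying the trace-class property and the absolute convergence of the trace expansion.
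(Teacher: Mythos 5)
Your proposal is correct and follows essentially the same route as the paper: expand $\Tr(\chi_{\Lambda_L}P)$ over the UGWB, split the sum by comparing $r_i$ with a cutoff of order $L$, bound the ``bulk'' contribution by $O(L)$ using the uniform discreteness $\inf_{i\neq j}|r_i-r_j|=\delta$ and the cap $m_i\le m_*$, bound the ``tail'' by a convergent exponential sum via the radial localization estimate \eqref{eq:Expq}, and finally divide by $|\Lambda_L|=(2L)^d$ with $d\ge 2$. Two small refinements you make over the paper's write-up are worth noting: you derive the trace-class property of $\chi_{\Lambda_L}P$ directly from the absolute convergence of $\sum_{i,j}\|\chi_{\Lambda_L}\psi_{i,j}\|_2$ (the paper instead invokes exponential localization of the integral kernel), and your cutoff $r_i\lessgtr 2L\sqrt{d}$ handles all $d\ge 2$ cleanly, whereas the paper's $\sqrt{2}L$ threshold is tailored to $d=2$.
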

\begin{proof}
First of all note that $\chi_{\Lambda_L}P$ being trace class follows from the exponential localization of the integral kernel of $P$, see for example \cite{MarcelliMoscolariPanati}. Then, we have that
$$
\begin{aligned}
&\Tr(\chi_{\Lambda_L}P)=\sum_{i \in \N} \sum_{j=1}^{m_i} \inner{ \psi_{i,j}}{ \chi_{\Lambda_L} \psi_{i,j}} \\
&= \sum_{i \, \text{s.t.} \, \jp{r_i} \leq \jp{\sqrt{2}L} } \sum_{j=1}^{m_i} \inner{ \psi_{i,j}}{ \chi_{\Lambda_L} \psi_{i,j}} + \sum_{i \, \text{s.t.} \, \jp{r_i} > \jp{\sqrt{2}L} } \sum_{j=1}^{m_i} \inner{ \psi_{i,j}}{ \chi_{\Lambda_L} \psi_{i,j}}=: A + B .
\end{aligned}
$$
Since $\inf_{i,j}|{r_i}-{r_j}|=\delta>0$, we have that the number of radii such that $\jp{r_i} \leq \jp{\sqrt{2}L} $ is bounded by $\sqrt{2}L/\delta$. Thus we get by Cauchy-Schwarz inequality that $|A| \leq (m_* \sqrt{2}L)/\delta$. 

Moreover, assume that $\jp{r_i}>\jp{\sqrt{2}L}$, then we have
$$
\sup_{1\leq j \leq m_i}\|\chi_{\Lambda_L} \psi_{i,j} \|^2 \leq \left(\sup_{\x \in \Lambda_L}  \eu^{- q |\jp{r_i}-\jp{\x}|}\right) \int_{\R^d} \eu^{ q |\jp{\x}-\jp{r_i}|}  |\psi_{i,j}(\x)|^2 \di \x \leq  \eu^{- q |\jp{r_i}-\jp{\sqrt{2}L}|} M.
$$
Therefore we get that
$$
|B| \leq m_* M  \sum_{i \, \text{s.t.} \, \jp{r_i} > \jp{\sqrt{2}L} } \eu^{- q |\jp{r_i}-\jp{\sqrt{2}L}|} \leq \frac{2 m_* C}{q}.
$$
Thus we showed that $\Tr(\chi_{\Lambda_L}P)$ grows at most linearly in $L$, and since $|\Lambda_L|=(2L)^d$, for $d \geq 2$ the thesis follows.  
\end{proof}

\section{UGWB for the Landau operator}	
\label{Sec:Landau}
In the previous section we showed the existence of an ultra-generalized Wannier basis for every orthogonal projection with an exponentially localized integral kernel. In particular, this shows that a UGWB exists for every projection associated to an isolated component of the spectrum of a Schr\"odinger operator, irrespectively of the Chern character of the projection, as detailed in the Appendix. As an explicit example, in the following we construct the UGWB for the projection on any Landau level, thus showing that the existence of a UGWB is insensitive to the topology of the Fermi projection. Since the operator considered in Theorem\ref{TheoremBProdan}.\ref{UGWB} is constructed using a function $f$ that is radially symmetric, $W_f$ reduces in the special setting of the Landau Hamiltonian to a Toeplitz operator, a class of operators extensively studied in the literature \cite{RaikovWarzel}, see also the recent review \cite{Yoshino2017}. 

The Landau Hamiltonian, describing a charged point particle moving under the influence of a constant magnetic field $b>0$ perpendicular to the $xy$ plane, is defined by
\begin{equation}
	\label{LandauHamiltonian}
	H_L:= \half (-\iu \nabla - b{\bf A}_L)^2 \, 
\end{equation}
where ${\bf A}_L$ is the magnetic potential corresponding to a constant magnetic field in the symmetric gauge, that is ${\bf A}_L(\x)=\frac{1}{2}(-x_2,x_1)$.  
$H_L$ is essentially selfadjoint on the dense domain $C^{\infty}_0(\R^2)$ in $L^2(\R^2)$, its spectrum is purely point spectrum given by the eigenvalues
\begin{equation}
	E_n= \frac{b}{2}(2n + 1)\, , \qquad  n \in \N,
\end{equation}
where each $E_n$ is infinitely degenerate. With a little abuse of terminology, the name $n^{th}$ Landau level refers to both the eigenvalue $E_n$ and the corresponding eigenspace. A special orthonormal basis for the $n^{th}$ Landau level  can be written in terms of the Laguerre polynomial. For $\mathbf{x} \in \mathbb{R}^2, n \in \mathbb{N}$, and $k \in \mathbb{Z}_{+}-n:=\{-n,-n+1, \ldots\}$ one defines
$$
\varphi_{n, k}(\mathbf{x}):=\sqrt{\frac{n !}{(k+n) !}}\left[\sqrt{\frac{b}{2}}(x_1+i x_2)\right]^k \mathrm{~L}_n^{(k)}\left(\frac{b\|\mathbf{x}\|^2}{2}\right) \sqrt{\frac{b}{2 \pi}} \eu^{-\frac{b\|\mathbf{x}\|^2}{4}}
$$
where
$$
\mathrm{L}_n^{(\alpha)}(\xi):=\sum_{m=0}^n\left(\begin{array}{c}
n+\alpha \\
n-m
\end{array}\right) \frac{(-\xi)^m}{m !}, \quad \xi \geq 0
$$
are the generalized Laguerre polynomials which are defined using the binomial coefficients $\left(\begin{array}{c}\alpha \\ m\end{array}\right):=\alpha(\alpha-1) \cdot \ldots \cdot(\alpha-m+1) / m !$ if $m \in \mathbb{Z}_{+} \backslash\{0\}$, and $\left(\begin{array}{c}\alpha \\ 0\end{array}\right):=1$ for all $\alpha  \in \R$.

Let $P_n$ be the projection on the $n^{th}$ Landau level. We consider the operator 
$$
W_n:= P_n \eu^{-q\jp{\mathbf{X}}} P_n \, , \qquad q>0.
$$
In this setting Theorem \ref{TheoremBProdan} for $W_f\equiv W_n$, reduces to \cite[ Lemma~3.2 and Lemma~3.3]{RaikovWarzel}, in which the corresponding eigenfunctions and eigenvalues are provided. For completeness we repeat here the explicit construction, which is just a simple computation. Notice that we can choose any $q>0$ in the definition of the operator $W_n$ because each of the integral kernels of the projections $P_n$ has a Gaussian decay.

\begin{prop}
	\label{prop:UGWBLLL}
	The eigenpairs $\left\{\lambda_{n,k},\{\varphi_{n,k}\}_{k \in \mathbb{Z}_{+}-n }\right\}$ provide an ultra-generalized Wannier basis for the projection onto the $n^{th}$ Landau level, and each of the eigenvalues $\lambda_{n,k}$ is explicitly given by 
	\begin{equation}
		\lambda_{n,k} \;=\; \frac{n !}{(k+n) !} \int_0^{\infty}\,  \eu^{-q{\jp{\sqrt{2 \xi / b}}}} \eu^{-\xi} \xi^k \mathrm{~L}_n^{(k)}(\xi)^2 \di \xi, \quad k \in \mathbb{Z}_{+}-n \, .
	\end{equation}
\end{prop}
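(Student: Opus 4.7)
The plan is to exploit the full rotational symmetry of the problem. Both $H_L$ (in the symmetric gauge) and the multiplication operator $\eu^{-q\jp{\mathbf{X}}}$ are invariant under the natural $SO(2)$-action on $L^2(\R^2)$, so the spectral projection $P_n$ and the Toeplitz operator $W_n$ commute with this action, equivalently with the angular momentum operator $L_z = -\iu\,\partial_\theta$. This reduces the eigenvalue problem for $W_n$ to a one-dimensional radial computation.

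First I would verify that the basis $\{\varphi_{n,k}\}_{k \in \Z_{+}-n}$ is precisely the joint eigenbasis of $H_L$ and $L_z$: in polar coordinates $\varphi_{n,k}(r,\theta) = \eu^{\iu k\theta}\, u_{n,k}(r)$ for an explicit radial function $u_{n,k}$ built from the Laguerre polynomial, and it is classical (via the Fock / ladder-operator diagonalization of $H_L$) that the intersection of $\Ran(P_n)$ with the $L_z = k$ eigenspace is the one-dimensional subspace $\Span\{\varphi_{n,k}\}$. Since $W_n$ preserves $\Ran(P_n)$ and commutes with $L_z$, the vector $W_n \varphi_{n,k}$ must lie in that same one-dimensional subspace; hence $\varphi_{n,k}$ is an eigenvector of $W_n$. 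Using $P_n \varphi_{n,k} = \varphi_{n,k}$, the corresponding eigenvalue is simply
\[
\lambda_{n,k} \;=\; \inner{\varphi_{n,k}}{W_n \varphi_{n,k}} \;=\; \int_{\R^2} \eu^{-q\jp{\mathbf{x}}}\, |\varphi_{n,k}(\mathbf{x})|^2\, \di\mathbf{x}.
\]

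Next I would compute this integral explicitly. The integrand is radial since $|\varphi_{n,k}(\mathbf{x})|^2$ depends only on $r = \|\mathbf{x}\|$, so the angular integration contributes a factor $2\pi$. Substituting $\xi = b r^2/2$ (which turns $\sqrt{1+r^2}$ into $\jp{\sqrt{2\xi/b}}$ and $r^{2k+1}\,\di r$ into $(2\xi/b)^k\,\di\xi/b$), the various powers of $b/2$ cancel against the normalization constant in $\varphi_{n,k}$ and the Gaussian factor $\eu^{-b r^2/2}$ becomes $\eu^{-\xi}$, yielding exactly the stated formula. Finally, since $\{\varphi_{n,k}\}_{k \in \Z_{+}-n}$ is a complete orthonormal basis of $\Ran(P_n)$, these eigenvectors exhaust the spectrum of $W_n$, and together with Theorem~\ref{TheoremBProdan} (the Gaussian decay of $P_n(\mathbf{x},\mathbf{y})$ legitimizes any $q>0$) they constitute an UGWB in the sense of Definition~\ref{def:UGWB}.

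The argument is essentially a symmetry reduction followed by an elementary change of variables, so no serious obstacle arises. The only step meriting care is the one-dimensionality of the $L_z = k$ subspace inside the $n$-th Landau level, which however is a classical consequence of the creation/annihilation operator analysis of $H_L$ in the symmetric gauge and guarantees that no degeneracy-mixing can spoil the assignment $\varphi_{n,k} \mapsto \lambda_{n,k}\,\varphi_{n,k}$.
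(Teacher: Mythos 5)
Your proof is correct and is essentially the same as the paper's: both hinge on rotational invariance, the only cosmetic difference being that the paper verifies diagonality by computing the matrix element $\inner{\varphi_{n,k}}{W_n\varphi_{n,k'}}$ directly and reading off the factor $\int_0^{2\pi}\eu^{\iu(k'-k)\theta}\,\di\theta = 2\pi\delta_{k,k'}$, whereas you phrase the same fact abstractly as commutation of $W_n$ with $L_z$ plus one-dimensionality of the $L_z=k$ sector in $\Ran P_n$. The eigenvalue computation via the substitution $\xi = b r^2/2$ is identical.
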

\begin{proof}
From Theorem \ref{TheoremBProdan} we have that $W_n$ is selfadjoint and bounded. Moreover, it is a standard fact that $\{\varphi_{n,k}\}_{k \in \mathbb{Z}_{+}-n }$ are an orthonormal basis for the range of $P_n$. Furthermore, we can explicitly compute the action of $W_n$ on such orthornomal basis, that is
$$
\begin{aligned}
&C^{-1}_{n,k,k'}\inner{ \varphi_{n,k}}{ W_n \varphi_{n,k'} }\\&=  \int_{\R^2} \frac{b^{\frac{(k+k')}{2}}}{2^{\frac{(k+k')}{2}}} \left[(x_1-i x_2)\right]^k \hspace{-0.1cm} \left[(x_1+i x_2)\right]^{k'} \hspace{-0.1cm}\eu^{-q\jp{\|\x\|}} \mathrm{L}_n^{(k)}\left(\frac{b\|\mathbf{x}\|^2}{2}\right) \mathrm{L}_n^{(k')}\left(\frac{b\|\mathbf{x}\|^2}{2}\right)  \eu^{-\frac{b\|\mathbf{x}\|^2}{2}} \di\x \\
&=\frac{1}{b} \int_{0}^{2\pi}  \eu^{\iu (k'-k)\theta}\di \theta \int_{0}^{\infty}  \, \eu^{-q{\jp{\sqrt{2 \xi / b}}}} \xi^{\frac{k+k'}{2}} \mathrm{~L}_n^{(k)}(\xi) \mathrm{~L}_n^{(k')}(\xi)\eu^{-\xi} \di \xi= \delta_{k,k'} 	\lambda_{n,k}
\end{aligned}
$$
where $C_{n,k,k'}:=\left(\frac{b}{2 \pi}\right) \sqrt{\frac{n !}{(k+n) !}} \sqrt{\frac{n !}{(k'+n) !}}$ and in the second to last equality we have used polar coordinates and the change of variable $\xi=\frac{b\|\x\|^2}{2}$.
\end{proof}

It is interesting to obtain an estimate of the growth of the localization radii $ r_{n,k}$ of the ultra-generalized Wannier basis exhibited above. For simplicity we consider the case of the lowest Landau level.  First, we notice that
\begin{equation}
	\lambda_{0,k} \leq \frac{\eu^{-q}}{k!} \int_{0}^{\infty}  \eu^{-\xi} \xi^k \di \xi = \eu^{-q}\,,
\end{equation}
\begin{equation}
	\lambda_{0,k} \geq \frac{1}{k!} \int_{0}^{\infty} \eu^{-q(1+\frac{2\xi}{b})} \eu^{-\xi} \xi^k \di \xi \;=\; \frac{\eu^{-q}}{k!} \int_{0}^{\infty} \eu^{-\xi(1+\frac{2q}{b})} \xi^k \di \xi \, .
\end{equation}
Thus we get
\begin{equation}
	\lambda_{0,k} \geq \frac{\eu^{-q}}{k!(1+\frac{2q}{b})^{k+1}} \int_{0}^{\infty} \eu^{-s} s^k \di s \;=\; \eu^{-q} (1+\frac{2q}{b})^{-(k+1)} \,.
\end{equation}
Summing up we obtain
\begin{equation}
	1\leq \;  \jp{r_{0,k}} \; \leq 1 + \frac{(k+1)}{q} \ln(1+\frac{2q}{b}) \, .
\end{equation}

\medskip

As it is well-known, every Landau level has a Chern number equal to one (in suitable units) see for example \cite{AvronSeilerSimon1994, DeNittisMoscolariGomi} for explicit computations. Therefore, Proposition~\ref{prop:UGWBLLL} provides an explicit example of a  UGWB for a system that is not time-symmetric and with non trivial topological features. Furthermore, since each of the eigenspaces associated with $\lambda_{n,k}$ is one-dimensional and the integrated density of states of $P_n$ is proportional to the magnetic field $b$, Proposition \ref{prop:InfIDS} implies that $\inf_{i,j}|{r_{n,i}}-{r_{n,j}}|=0$, hence the set of radii is not uniformly discrete.

The strikingly simple structure of the operator $W_n$ described in Proposition \ref{prop:UGWBLLL} is due to the fact that such operator reduces exactly to a Toeplitz operator in the Segal-Bargmann representation \cite{GirvinJach1984} (see also \cite{MoscolariPanati} for a recent review on the subject). Let us briefly show this reduction in the simpler setting of the lowest Landau level. 

Consider the Gaussian measure $\di\mu := N \eu^{-\frac{b}{4}|z|^2} \di z$, with $N$ positive constant, and define the weighted $L^2$-space
 \begin{equation*}
 	L^2(\C,\di \mu) := \left\{g: \C \to \C : \int_{\C}|g(z)|^2 \di\mu < \infty\right\}
 \end{equation*} 
 endowed with the scalar product
 \begin{equation*}
 	\langle f,g \rangle_{SB} := \int_{\C} \overline{f(z)} g(z) \di\mu \, .
 \end{equation*}
Then, the Segal-Bargamm space is defined as follows.
\begin{dfn}[Segal \cite{Segal1963}, Bargmann \cite{Bargmann1961}]
	Let $\Hol(\C)$ be the space of entire functions. The Segal-Bargmann space $SB(\C)$ is defined as 
	\begin{equation*}
		SB(\C):=\left\{ g\in \Hol(\C) : \int_{\C}|g(z)|^2 \di\mu < \infty \right\} \;=\; L^2(\C,\di\mu) \cap \Hol(\C) \;.
	\end{equation*}
\end{dfn}
The unitary operator $U:P_0 L^2(\R^2) \to SB(\C) $ that maps the lowest Landau level onto the Segal-Bargmann space is given by
$$
(U \psi)(z) = f(z) \, 
$$
where $\psi(\x)=f(\x) \varphi_{0,0} (\x)$ via the usual identification of $\R^2$ with the complex plane, $z=x_1+\iu x_2$. 
We denote by $\Pi_0$ the projection in $L^2(\C,\di\mu)$ onto the Segal-Bargmann space $SB(\C)$. Notice that by setting $b=4$, $N=\frac{1}{\pi}$ we recover the standard definition of the Segal-Bargmann space.

In this setting, the operator $W_0$, is a particular restriction of a \emph{Toeplitz operator} \cite{Yoshino2017}.
 \begin{dfn}[Toeplitz operator]
	Let $F$ be a bounded measurable function on the complex plane $\C$, and $M_F$ the multiplication operator in $L^2(\C,\di\mu)$ associated with the function $F$, that is $(M_F g)(z)=F(z)g(z)$. The operator 
	\begin{equation}
		T_F:= \Pi_0 \, M_F
	\end{equation}  
	is called Toeplitz operator associated with the symbol $F$.
\end{dfn}
Therefore, $W_0$ is unitarily equivalent to the restriction to $\Ran \Pi_0$ of the Toeplitz operator associated with the symbol $\ell(z):=\eu^{-q\jp{z}}$:
\begin{equation}
	U W_0 U^*\;=\;\Pi_0 \, T_\ell \, \Pi_0\;\;.
\end{equation}

\textbf{Acknowledgements.} 
M.M. is grateful to H. Cornean for stimulating discussions about Combes-Thomas estimates.
G.P. is grateful to G.C. Thiang for an interesting exchange of ideas on related topics.
The work of M.M. has been supported by a fellowship of the Alexander von Humboldt Foundation. M.M. gratefully acknowledges the support of PNRR Italia
	Domani and Next Generation Eu through the ICSC National Research Centre for High Performance Computing, Big Data and Quantum Computing. G.P. gratefully acknowledges financial support from the National Quantum Science and Technology Institute  (PNRR MUR project PE0000023-NQSTI). The authors gratefully acknowledge the support of the National Group of Mathematical Physics (GNFM-INdAM).

\textbf{Data Availability Statement.}
Data sharing is not applicable to this article as no new data were created or analyzed in this study.

\appendix
\section{Combes-Thomas estimates}
\label{sec:CombesThomas}
Combes-Thomas estimates are ubiquitous in the analysis of Schr\"odinger operators and several proofs can be found in the literature, see for example \cite{CT73,Simon,BaCoHi,CorneanNenciuThermodynamic}. In this appendix we adapt the proofs presented in \cite{CorneanNenciuThermodynamic} for Schr\"odinger operators with smooth potentials and in dimension $d=3$ to our more general setting. We consider magnetic Schr\"odinger operators in $L^2(\R^d)$, with $d=2,3$, namely
\begin{equation}
\label{H}
H=(-\iu\nabla - {\bf A})^2 + V = -\Delta_A + V
\end{equation}
where we assume that the magnetic vector potential ${\bf A}: \R^d \to \R^d$ is in $L^4_{{\rm loc}}(\R^d,\R^d)$ with distributional derivative $\nabla \cdot {\bf A} \in L^2_{{\rm loc}}(\R^d)$ and that $V$ is in $L^2_{\rm uloc}(\R^d)$, which means that $V$ is uniformly locally square-integrable, \ie
\begin{equation}
\label{UniformL2loc}
\sup_{\x \in \R^d} \int\limits_{\|\x-\y\|\leq 1} |V (\y)|^2  \di \y \, < \, \infty.
\end{equation}
From general results on Schr\"odinger operators it follows that $H$ is essentially selfajoint on $C^{\infty}_c(\R^d)$ \cite[Theorem~3]{LeinfelderSimader1981}.
In the following we make use of the notation $-\Delta_A=\sum^d_{i=1}(P_A)_i^2$, where $(P_A)_i:=(-\iu \nabla -{\bf A})_i$, and $\eu^{s{\langle \cdot -\x_0 \rangle}}$ to denote the multiplication operator by the function $\x \mapsto \eu^{s \langle \x -\x_0\rangle}$, $s \in \R$, $\x_0 \in \R^d$. 
\begin{prop}
	\label{prop:gCTEstimates}
	Assume that $z \in D_\eta:= \left\{ z \in \C \; | \; \dist(z,\sigma(H))>\eta>0 \right\}$, $\eta<1$. Then for $i \in \left\{1,\dots d\right\}$ 
	there exists a constant $C$ such that
	\begin{equation}
	\label{gAuxCT2}
	\sup_{ z \in D_\eta } \langle z \rangle^{-1} \left\| \left(\V_A\right)_i \left(H - z\right)^{-1}\right\| \leq \frac{C}{\eta} \, .  
	\end{equation}
\end{prop}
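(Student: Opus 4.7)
The plan is to reduce the claimed norm bound to a form-boundedness statement: each magnetic momentum $(P_A)_i$ should be $H$-form-bounded (since their sum of squares equals $-\Delta_A$), and $-\Delta_A$ should in turn be controlled by $H$ plus a constant, with the explicit $z$- and $\eta$-dependence arising from the standard resolvent estimate $\|(H-z)^{-1}\| \leq 1/\eta$ on $D_\eta$.

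First, I would establish that $V$ is infinitesimally form-bounded with respect to $-\Delta_A$. This combines the diamagnetic inequality $\langle |\phi|, -\Delta|\phi|\rangle \leq \langle \phi, -\Delta_A \phi\rangle$ with the classical fact that potentials in $L^2\sub{uloc}(\R^d)$ belong to the Kato class when $d \in \{2,3\}$ and are therefore infinitesimally form-bounded with respect to $-\Delta$. Composing these two facts, $V$ is infinitesimally form-bounded by $-\Delta_A$, so for any fixed $a \in (0,1)$ there exists $b=b(a)>0$ such that
\[
\sum_{j=1}^d \|(P_A)_j \phi\|^2 \;\leq\; \frac{1}{1-a}\bigl(\langle\phi, H\phi\rangle + b\|\phi\|^2\bigr)
\]
for every $\phi$ in the common form domain of $H$ and $-\Delta_A$.

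Next, I would specialize to $\phi := (H-z)^{-1}\psi$, which lies in $\mathcal{D}(H)$ and hence in the form domain. Since $H$ is self-adjoint, $\langle\phi, H\phi\rangle$ is real and equals $\re\langle\phi,\psi\rangle + \re(z)\|\phi\|^2$; using $\|\phi\| \leq \|\psi\|/\eta$, I would bound its absolute value by $(\eta + |z|)\|\psi\|^2/\eta^2$. Combining this with the form-boundedness inequality and using $\eta < 1$, I would obtain
\[
\|(P_A)_i \phi\|^2 \;\leq\; \sum_{j=1}^d \|(P_A)_j \phi\|^2 \;\leq\; \frac{C\, \jp{z}}{\eta^2}\, \|\psi\|^2
\]
for a constant $C$ depending only on $V$, ${\bf A}$ and the choice of $a$. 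Taking square roots yields $\|(P_A)_i(H-z)^{-1}\| \leq C\jp{z}^{1/2}/\eta$, which is in fact stronger than the claimed bound $C\jp{z}/\eta$ since $\jp{z}\geq 1$.

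The main obstacle I expect is the form-boundedness step, because the hypotheses on $V$ and ${\bf A}$ are weak (only $V \in L^2\sub{uloc}$ and ${\bf A} \in L^4\sub{loc}$ with distributional divergence in $L^2\sub{loc}$). Verifying that the diamagnetic inequality applies under precisely these assumptions, identifying the form domains so that the identity $\langle\phi, H\phi\rangle = \sum_j \|(P_A)_j\phi\|^2 + \langle\phi, V\phi\rangle$ holds on the range of $(H-z)^{-1}$, and tracking the constants carefully enough to extract the explicit $\jp{z}/\eta$ dependence, is where the bulk of the technical work lies. Once that foundation is secured, the remaining resolvent manipulation is a short calculation of the type sketched above.
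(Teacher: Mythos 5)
Your proposal is correct, and it reaches the stated bound (in fact a slightly sharper one, $C\jp{z}^{1/2}/\eta$) by a genuinely different route than the paper's. The underlying quadratic-form trick — namely that $\sum_j\|(\V_A)_j\phi\|^2$ can be expressed via $\re\langle\phi, H\phi\rangle$ (or $\re\langle\phi,\psi\rangle$) when $\phi$ is obtained from a resolvent — is actually used in the paper as well, but only for the \emph{free} magnetic operator: the paper first shows $\sum_i\|(\V_A)_i(-\Delta_A - \iu\lambda)^{-1}\psi\|^2 = \re\langle(-\Delta_A-\iu\lambda)^{-1}\psi,\psi\rangle \leq C/|\lambda|$, and then couples in $V$ through the resolvent identity and the \emph{operator} bound $\|V(H-\iu\lambda)^{-1}\|\leq \half$ (relative boundedness, cited from Amrein), factoring $(\V_A)_i(H-z)^{-1} = \left[(\V_A)_i(-\Delta_A-\iu\lambda)^{-1}\right]\left[(-\Delta_A-\iu\lambda)(H-z)^{-1}\right]$ and bounding each factor separately. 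You instead apply the form identity directly to $\phi=(H-z)^{-1}\psi$ for the \emph{full} operator $H$, which bypasses the factorization and the second resolvent identity entirely, but requires you to establish up front that $V$ is form-bounded relative to $-\Delta_A$ (via diamagnetic inequality and the Kato-class inclusion for $L^2\sub{uloc}$ in $d\leq 3$) and to identify form domains so that $\langle\phi,H\phi\rangle = \sum_j\|(\V_A)_j\phi\|^2+\langle\phi,V\phi\rangle$ holds on $\mathcal{D}(H)$; you rightly flag this as the technically delicate point. The trade-off is thus: the paper's argument leans on an operator-theoretic relative-boundedness input and is more modular (the free-operator estimate is reused later), whereas yours is shorter, stays entirely at the level of forms, and yields the improved $\jp{z}^{1/2}$ dependence — which, however, is not exploited in the sequel (Proposition \ref{prop:gCT} only needs $\sup_{z\in K}\jp{z}$ as a constant), so the gain is cosmetic in the present paper.
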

\begin{proof}
	Consider $\lambda \in \R$, for every $\psi \in L^2(\R^d)$, $\|\psi\|=1$, we have
	$$
	\label{gPrelCT}
	\sum_{i=1}^d \left\|\left(\V_A\right)_i \left( -\Delta_A - \iu \lambda  \right)^{-1} \psi\right\|^2 =\textrm{Re}\left(\inner{ \left( -\Delta_A -\iu\lambda  \right)^{-1} \psi }{ \psi} \right)  \leq \frac{C}{|\lambda|} \, .
	$$
	Since $V$ is relatively bounded with respect to $H$, there exists a $\lambda$ such that ( see \cite[Proposition~2.42]{Amrein2009})
	\begin{equation}
	\label{gCTAux}
	\left\|V\left(H-\iu\lambda \right)^{-1}\right\| \leq \frac{1}{2} \, .
	\end{equation}
	By using the resolvent identity 
	$$
	\left(H-z\right)^{-1}=\left(H-\iu\lambda\right)^{-1} + (z-\iu\lambda)\left(H-\iu\lambda\right)^{-1}\left(H-z\right)^{-1}
	$$
	together with \eqref{gCTAux} we get
	\begin{equation}
	\label{gCTAux2}
	\left\|\left(-\Delta_A - \iu \lambda \right) 	\left(H-z\right)^{-1} \right\| \leq 2\left(1 + \frac{|z|+|\lambda|}{\eta} \right) \, .
	\end{equation}
	Therefore, by writing 
	$$
	\left\|\left(\V_A\right)_i 	\left(H-z\right)^{-1}\right\| \leq  \left\| \left(\V_A\right)_i \left( -\Delta_A - \iu \lambda  \right)^{-1}\right\|  \left\| \left( -\Delta_A - \iu \lambda  \right) \left(H-z\right)^{-1}\right\|
	$$
	we obtain the estimate \eqref{gAuxCT2}.
\end{proof}

\begin{prop}[Combes--Thomas estimates]
	\label{prop:gCT}
	Assume that $z \in K$ where $K$ is a compact subset of $D_\eta$. 
	Denote by $\bar{r}=\sup_{z \in K} \jp{z} $. Then there exist a $\delta_0$ and a constant $C$ such that for every $0\leq \delta\leq \delta_0$ we have
	\begin{align}
	\label{gCT}
	&\sup_{z \in K} \sup_{ \x_0 \in \R^d} \left\| \eu^{\pm\frac{\delta}{ \bar{r} } \langle \cdot -\x_0 \rangle  } \left(H-z\right)^{-1} \eu^{\mp\frac{\delta}{\bar{r}} \langle \cdot - \x_0 \rangle  } \right\| \leq \frac{C}{\eta}\, ,\\
	\label{gCT1.1}
	&\sup_{z \in K} \sup_{ \x_0 \in \R^d} \left\{ \bar{r}^{-1} \left\| \left(\V_A\right)_i \eu^{\pm\frac{\delta}{\bar{r}} \langle \cdot -\x_0  \rangle } \left(H-z\right)^{-1} \eu^{\mp\frac{\delta}{\bar{r}} \langle \cdot - \x_0 \rangle  } \right\| \right\} \leq \frac{C}{\eta}\, \, . 
	\end{align}
\end{prop}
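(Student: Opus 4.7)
The plan is to implement the classical Combes--Thomas conjugation argument, using Proposition~\ref{prop:gCTEstimates} as the crucial input. Set $\phi(\x) := \frac{\delta}{\bar{r}}\jp{\x - \x_0}$ and let $W_\delta := \eu^{\phi}$ be the corresponding (unbounded) multiplication operator. A direct formal calculation on $C_c^\infty(\R^d)$ yields $W_\delta (\V_A)_j W_\delta^{-1} = (\V_A)_j + \iu(\partial_j \phi)$, and hence
\[
W_\delta H W_\delta^{-1} = H + R_\delta, \qquad R_\delta := \sum_{j=1}^d \bigl[\,2\iu(\partial_j\phi)(\V_A)_j + \partial_j^2\phi - (\partial_j \phi)^2\,\bigr],
\]
after commuting $(\V_A)_j$ with $(\partial_j\phi)$ via $[(\V_A)_j,\partial_j\phi] = -\iu\,\partial_j^2\phi$.

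The crucial observation is that the first and second derivatives of $\x \mapsto \jp{\x-\x_0}$ are bounded by a universal constant, so $\|\partial_j\phi\|_\infty \leq \delta/\bar r$ and $\|\partial_j^2\phi\|_\infty \leq C\delta/\bar r$. Combined with the bound $\|(\V_A)_j(H-z)^{-1}\| \leq C\jp{z}/\eta \leq C\bar r/\eta$ supplied by Proposition~\ref{prop:gCTEstimates}, this gives
\[
\sup_{\x_0 \in \R^d,\, z \in K}\, \|R_\delta (H-z)^{-1}\| \,\leq\, C'\delta/\eta \,,
\]
the $\bar r$ produced by the momentum factor canceling the $\bar r^{-1}$ from $\partial_j\phi$. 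Choosing $\delta_0 := \eta/(2C')$, for every $0 \leq \delta \leq \delta_0$ the factor $I + R_\delta(H-z)^{-1}$ is Neumann-invertible with norm at most $2$, so that $H + R_\delta$ with domain $D(H)$ has $z$ in its resolvent set and
\[
(H + R_\delta - z)^{-1} \,=\, (H-z)^{-1} \bigl[I + R_\delta (H-z)^{-1}\bigr]^{-1}
\]
is bounded by $2/\eta$.

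The main technical obstacle is to identify this resolvent with the formal conjugate $W_\delta(H-z)^{-1}W_\delta^{-1}$, since $W_\delta$ itself is unbounded. I would handle this step by smooth truncation: replace $\phi$ by $\phi_n := \phi\,\chi_n$ with cutoffs $\chi_n \to 1$, so that $W_{\delta,n}$ is bounded and the identity $W_{\delta,n} H W_{\delta,n}^{-1} = H + R_{\delta,n}$ holds as a genuine operator equation; the bound $\|R_{\delta,n}(H-z)^{-1}\| \leq C'\delta/\eta$ is uniform in $n$ by the same computation, which allows one to pass to the limit $n \to \infty$ matrix-elementwise on the common core $C_c^\infty(\R^d)$ via dominated convergence. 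This delivers \eqref{gCT}; the bound for the opposite sign follows by replacing $\delta$ with $-\delta$ throughout.

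For the derivative bound \eqref{gCT1.1}, applying $(\V_A)_i$ on the left of the identity $W_\delta(H-z)^{-1}W_\delta^{-1} = (H-z)^{-1}[I + R_\delta(H-z)^{-1}]^{-1}$ and invoking Proposition~\ref{prop:gCTEstimates} once more gives
\[
\|(\V_A)_i W_\delta (H-z)^{-1} W_\delta^{-1}\| \,\leq\, \|(\V_A)_i(H-z)^{-1}\|\cdot 2 \,\leq\, \tfrac{2C\bar r}{\eta}\,,
\]
and dividing by $\bar r$ yields the asserted uniform estimate.
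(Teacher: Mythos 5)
Your proof is correct and follows essentially the same route as the paper: conjugate $H$ by $\eu^{\frac{\delta}{\bar r}\jp{\cdot-\x_0}}$, identify the perturbation (your $R_\delta$ is exactly the paper's $s\sum_i w_i(\V_A)_i + sW_1 + s^2W_2$ written out explicitly), bound $\|R_\delta(H-z)^{-1}\|$ using Proposition~\ref{prop:gCTEstimates} together with $\jp{z}\le\bar r$ so that the factor $\bar r^{-1}$ from $\nabla\phi$ cancels the $\jp{z}$ from the momentum estimate, and invert by a Neumann series; \eqref{gCT1.1} then follows by hitting the resulting identity with $(\V_A)_i(H-z)^{-1}$ once more. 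The only place where I would caution you is the truncation: if you literally take $\phi_n=\phi\,\chi_n$ with a multiplicative cutoff, then $\nabla\phi_n$ picks up the term $\phi\,\nabla\chi_n$, which is \emph{not} uniformly bounded in $n$ (since $\phi$ grows linearly). The standard fix is to truncate the weight itself, e.g.\ replace $\jp{\x-\x_0}$ by $\frac{\jp{\x-\x_0}}{1+\jp{\x-\x_0}/n}$ or a smoothed $\min(\jp{\x-\x_0},n)$, so that the gradient and Hessian stay bounded by the same constants as for $\phi$; with that correction your limiting argument goes through. The paper itself suppresses this regularization step and writes the conjugation identity directly, so your attempt to make it rigorous is a genuine (if slightly misstated) improvement in exposition rather than a divergence in method.
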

\begin{proof}
	For $s \in \R$ the well-known Combes--Thomas rotation gives
	$$
	\eu^{s \langle \cdot -\x_0 \rangle } \left(H-z\right) \eu^{-s \langle \cdot -\x_0 \rangle } = H-z+s\sum_{i=1}^{d}w_i \left(\V_A\right)_i + s W_1 +s^2 W_2 \, 
	$$
	where $w_i$, $W_1$ and $W_2$ are bounded functions uniformly in $\x_0$. Consider now $s=\frac{\delta}{\bar{r}}$. Using \eqref{gAuxCT2} and taking $\delta$ small enough, we obtain
	$$
	\begin{aligned}
	&\sup_{z \in K} \sup_{x_0 \in \R^d}\left\|\left[s\sum_{i=1}^{d}w_i \left(\V_A\right)_i + s W_1 +s^2 W_2\right]\left(H-z\right)^{-1}\right\| \leq \frac{\delta C}{\eta} \left( 1 + \frac{1}{\bar{r}  } + \frac{\delta}{\bar{r}^2  } \right) \leq \frac{1}{2} \, .
	\end{aligned}
	$$
	Therefore we have
	\begin{equation}
	\label{gAuxCT3}
	\begin{aligned}
	&\eu^{s \langle \cdot -\x_0 \rangle } \left(H-z\right)^{-1} \eu^{-s \langle \cdot -\x_0 \rangle }\\&=\left(H-z\right)^{-1} 
	\left\{ 1 + \left[s\sum_{i=1}^{d}w_i \left(\V_A\right)_i + s W_1 +s^2 W_2\right] \left(H-z\right)^{-1} \right\}^{-1} \, .
	\end{aligned}
	\end{equation}
	which implies \eqref{gCT}. Coupling \eqref{gCT} together with \eqref{gAuxCT2} we also obtain the proof of \eqref{gCT1.1}.
\end{proof}

Consider now $\lambda\geq0$ large enough. At the price of enlarging the compact $K$ we can assume that $\lambda \in K$. From \eqref{gAuxCT3} we get for $|s|$ small enough
\begin{equation}
\label{gAuxCT4}
\begin{aligned}
&\left\|\left(-\Delta_A+ \lambda\right) \eu^{s \langle \cdot -\x_0 \rangle } \left(H+\lambda\right)^{-1} \eu^{-s \langle \cdot -\x_0 \rangle } \right\| \\
&= \left\| \left( 1 - V \left(H+ \lambda\right)^{-1} \right)
\left\{ 1 + \left[s\sum_{i=1}^{2}w_i \left(\V_A\right)_i + s W_1 +s^2 W_2\right] \left(H+\lambda\right)^{-1} \right\}^{-1} \right\| \\
&\leq  C_\lambda \, ,
\end{aligned}
\end{equation}
where the constant $C_\lambda$ depends on the $\lambda$ chosen. 
By commuting twice we get
\begin{equation}
\label{eq:AucXT}
\begin{aligned}
\left(-\Delta_A+ \lambda\right) \eu^{s \langle \cdot -\x_0 \rangle }&=  \eu^{s \langle \cdot -\x_0 \rangle } \left(-\Delta_A+  \lambda\right) + \sum_{i=1}^d 2 \left(\V_A\right)_i \left[\left(-\iu \nabla \right)_i, \eu^{s \langle \cdot -\x_0 \rangle } \right]  \\
&\phantom{=}  -  \sum_{i=1}^d  \left[ \left(-\iu \nabla \right)_i, \left[ \left(-\iu \nabla\right)_i, \eu^{s \langle \cdot -\x_0 \rangle } \right]\right]   \, .
\end{aligned}
\end{equation}
Since $\left[\left(-\iu \nabla \right)_i, \eu^{s \langle {\x} -\x_0 \rangle } \right]= - \iu s \partial_i \langle \x -\x_0 \rangle \eu^{s \langle \mathbf{x} - \mathbf{x}_0 \rangle}$, using \eqref{eq:AucXT} together with \eqref{gAuxCT4}, \eqref{gCT1.1}, and setting $s=\frac{\delta}{\bar{r}}$ with $\delta$ small enough we obtain
\begin{equation}
\left\|\eu^{\pm \frac{\delta}{\bar{r}} \langle \cdot -\x_0 \rangle } \left(-\Delta_A+ \lambda\right) \left(H+ \lambda\right)^{-1} \eu^{\mp \frac{\delta}{\bar{r}} \langle \cdot -\x_0 \rangle } \right\| \leq C_{\eta,\lambda}\, ,
\end{equation}  
where $C_{\eta,\lambda}$ is a positive constant that depends only on $\eta$ and $\lambda$. In the same way, we can show that for $c$ small enough we have
\begin{equation}
\label{gCTlambda}
\left\|\eu^{\pm c \sqrt{\lambda}\langle \cdot -\x_0 \rangle } \left(-\Delta_A+ \lambda\right) \left(H+ \lambda\right)^{-1} \eu^{\mp c \sqrt{\lambda} \langle \cdot -\x_0 \rangle } \right\| \leq C_{\lambda}\, ,
\end{equation} 
where the positive constant $C_\lambda$ depends only on $\lambda$.
We now use these norm estimates to get the exponential decay of the integral kernel of the resolvent of $H$.  Consider $\lambda>0$ large enough. Notice that in the following we do not keep track of the $\lambda$-dependence of the constants, while we denote by $C$ any generic positive constant. By using the diamagnetic inequality, see for example \cite{Simon,BroderixHundertmarkLeschke}, we get that
\begin{equation}
\label{FEIntKernelLapl}
\left|\left(-\Delta_A+\lambda\right)^{-1}(\x,\x')\right| \leq \left(-\Delta + \lambda \right)^{-1}(\x,\x').
\end{equation}
In dimensions $d=2$ and $d=3$, the integral kernel of the resolvent of the Laplacian decays exponentially far from the diagonal and has an $L^2$-integrable singularity on the diagonal, for example in $d=2$ we have $$\left(-\Delta + \lambda \right)^{-1}(\x,\x')\leq C \eu^{-\sqrt{\lambda} \|\x-\x'\|} \left(2+ \left|\ln{ \|\x-\x'\|}\right| \right). $$
We are now ready to extract from the $L^2$-norm Combes--Thomas estimate an $L^2$ to $L^\infty$ estimate. Let us see more precisely how it works. From the explicit estimate \eqref{FEIntKernelLapl}, we deduce that there exists a positive constant $c$ such that 
$$
\sup_{\x_0 \in \R^d} \left\|\eu^{\mp c \sqrt{\lambda} \langle \cdot -\x_0 \rangle}\left(-\Delta_A+\lambda\right)^{-1} \eu^{\pm c \sqrt{\lambda} \langle \cdot -\x_0 \rangle }\right\|_{\mathcal{B}(L^2,L^\infty)} <\infty \, .
$$

This, together with the $L^2$ estimate \eqref{gCTlambda}, gives 
\begin{equation*}
\begin{aligned}
&\sup_{\x_0 \in \R^d}\left\|\eu^{\pm c \sqrt{\lambda} \langle{\cdot -\x_0} \rangle }\left(H+\lambda\right)^{-1} \eu^{\mp c \sqrt{\lambda} \langle \cdot - \x_0 \rangle}\right\|_{{\mathcal{B}(L^2,L^\infty)}} \\
&\leq  \sup_{\x_0 \in \R^d}\left\|\eu^{\pm c \sqrt{\lambda}\langle{\cdot -\x_0}\rangle}\left(-\Delta_A+\lambda\right)^{-1} \eu^{\mp c \sqrt{\lambda} \langle{\cdot -\x_0}\rangle }\right\|_{{{\mathcal{B}(L^2,L^\infty)}}} \\
&\phantom {\leq \sup_{x_0 \in \R^d}} \cdot \sup_{\x_0 \in \R^d}\left\| \eu^{\pm c \sqrt{\lambda} \langle \cdot - \x_0 \rangle }\left( -\Delta_A + \lambda\right) \left(H-\lambda\right)^{-1} \eu^{\mp c  \sqrt{\lambda} \langle \cdot - \x_0 \rangle} \right\|_{\mathcal{B}(L^2,L^2)} \, ,
\end{aligned} 
\end{equation*}
hence the operator $\eu^{\pm c \sqrt{\lambda} \langle{\cdot -\x_0}\rangle}\left(H+\lambda\right)^{-1} \eu^{\mp c \sqrt{\lambda} \langle \cdot - \x_0 \rangle}$ is bounded from $L^2$ to $L^\infty$ and it is a Carleman integral operator.

From \eqref{FEIntKernelLapl} we have that the measurable integral kernel $\left(-\Delta_A+\lambda\right)^{-1}(\cdot,\cdot)$ is bounded outside the diagonal, moreover, without loss of generality we consider that \eqref{FEIntKernelLapl} is valid pointwise for every (and not only for almost every) $\x,\x' \in \R^2$ (we choose a representative for the integral kernel  that is continuous outside of the diagonal \cite{Simon,BroderixHundertmarkLeschke}). Then, we have that
$$
\begin{aligned}
&\eu^{ -c \sqrt{\lambda}  \langle{\cdot -\x_0}\rangle}\left(H+\lambda\right)^{-1} \eu^{c \sqrt{\lambda} \langle \cdot - \x_0 \rangle}  
=: \left(\eu^{-c \sqrt{\lambda}  \langle{\cdot -\x_0}\rangle} \left( -\Delta_A + \lambda\right)^{-1}      \eu^{ c \sqrt{\lambda}  \langle{\cdot -\x_0}\rangle}\right) B_{\x_0}
\end{aligned}
$$
where we have set $B_{\x_0}:=\left(\eu^{ -c \sqrt{\lambda}  \langle{\cdot -\x_0}\rangle} \left( -\Delta_A + \lambda\right) \left(H-\lambda\right)^{-1} \eu^{c \sqrt{\lambda}  \langle{\cdot -\x_0}\rangle}\right)$. Consider now, for every $\psi \in L^2(\R^d)$ the map
$$
F_{\x',\x_0}(\psi):= \int_{\R^d}  \eu^{ -c \sqrt{\lambda}  \langle{\x' -\x_0}\rangle}\left( -\Delta_A + \lambda\right)^{-1}(\x',\x)      \eu^{ c \sqrt{\lambda}  \langle{\x -\x_0}\rangle} \left(B_{\x_0}\psi\right)(\x) \di\x .
$$
The map $F_{\x',\x_0}$ defines a bounded linear functional on $L^2$ and its norm is independent on $\x',\x_0$. Indeed
$$
\begin{aligned}
|F_{\x',\x_0}(\psi)| 
& \leq \sup_{ \x_0 \in \R^d} \| \eu^{ -c \sqrt{\lambda}  \langle{\cdot -\x_0}\rangle} \left( -\Delta_A + \lambda\right)^{-1}  \eu^{ c \sqrt{\lambda}  \langle{\cdot-\x_0}\rangle} \|_{\mathcal{B}(L^2,L^\infty)} \|B_{\x_0}\|_{\mathcal{B}(L^2,L^2)} \|\psi\|_2 \, .
\end{aligned}
$$
 Since from \eqref{gCTlambda} we know that the norm of $B_{\x_0}$ does not depend on $\x_0$, we get that $F_{\x',\x_0}$ defines a bounded linear functional on $L^2(\R^d)$ whose norm is independent on $\x_0,\x'$. From Riesz representation theorem we get that there exists a function $f_{\x',\x_0}$ in $L^2(\R^d)$ such that
$$
F_{\x',\x_0}(\psi)=\int_{\R^d} \overline{f_{\x',\x_0}(\x)} \psi(\x) \di \x.
$$
$F_{\x',\x_0}(\psi)$ can be rewritten as 
$$
F_{\x',\x_0}(\psi)=\left(\eu^{ -c \sqrt{\lambda}  \langle{\cdot -\x_0}\rangle}\left(H+\lambda\right)^{-1} \eu^{c \sqrt{\lambda} \langle \cdot - \x_0 \rangle} \psi \right)(\x')
$$
which implies that
$$
\sup_{\x_0,\x' \in \R^d} \| f_{\x',\x_0}(\cdot)\|_{2} = \sup_{\x_0,\x' \in \R^2} \left\| \left(\eu^{ -c \sqrt{\lambda}  \langle{\x' -\x_0}\rangle}\left(H+\lambda\right)^{-1}(\x',\cdot) \eu^{c \sqrt{\lambda} \langle \cdot - \x_0 \rangle}  \right) \right\|_{2} < \infty \, .
$$
By taking $\x'=\x_0$ (namely $F_{\x_0,\x_0}$) and exploiting the selfadjointness of the operators we also get
\begin{equation}
\label{eq:aux1}
\sup_{\x_0 \in \R^d} \left\| \left(H+\lambda\right)^{-1}(\x_0,\cdot) \eu^{c \sqrt{\lambda} \langle \cdot - \x_0 \rangle}   \right\|_{2} =  \sup_{\x_0 \in \R^d} \left\|  \eu^{c \sqrt{\lambda} \langle \cdot - \x_0 \rangle}\left(H+\lambda\right)^{-1}(\cdot,\x_0)   \right\|_{2}< \infty \, .
\end{equation}
Consider now the integral kernel of $\left(H+\lambda\right)^{-2}$, which is a priori defined using the integral kernel of the resolvent. By using \eqref{eq:aux1}, the Cauchy-Schwarz inequality and the triangle inequality, we get that
\begin{equation}
\begin{aligned}
&\sup_{\x, \x' \in \R^d} \left| \eu^{ c \sqrt{\lambda} \|\x-\x'\|} \left(H+\lambda\right)^{-2}(\x,\x') \right| \\
& \leq (\eu^{2c\sqrt{\lambda}}) \sup_{\x,\x'\in \R^d} \int_{ \R^d}  \eu^{ c \sqrt{\lambda} \langle \y-\x \rangle} \left| \left(H+\lambda\right)^{-1}(\x,\y ) \right| \left| \left(H+\lambda\right)^{-1}(\y,\x' ) \right| \eu^{c \sqrt{\lambda} \langle \y- \x' \rangle} \di \y  \\
&  \leq  (\eu^{2c\sqrt{\lambda}}) \sup_{\x \in \R^d} \left\|\left(H+\lambda\right)^{-1}(\x,\,\cdot ) \eu^{ c \sqrt{\lambda} \langle \cdot - \x \rangle}\right\|_{2} \sup_{\x' \in \R^d} \left\| \eu^{ c \sqrt{\lambda} \langle \cdot - \x' \rangle}\left(H+\lambda\right)^{-1}(\cdot\,,\x' )\right\|_{2} \\
&\leq (\eu^{2c\sqrt{\lambda}}) C \, .
\end{aligned}
\end{equation}
Therefore we have obtained that the second power of the resolvent is pointwise exponentially decaying, that is
\begin{equation}
\label{gFEAux8}
\left| \left(H+\lambda\right)^{-2}(\x,\x') \right| \leq  C \eu^{- c \sqrt{\lambda} \|\x-\x'\|} \,.
\end{equation}

Let us analyze how estimate \eqref{gFEAux8} propagates in the resolvent set. Since we are interested in proving the exponential decay of the integral kernel of the projection onto an isolated component of the spectrum, we assume that $z \in K$, with $K$ compact subset of $D_\eta$, as defined in Proposition \ref{prop:gCTEstimates} and \ref{prop:gCT}.
From \eqref{eq:aux1}  we get, for $\delta< c \sqrt{\lambda} \bar{r}$ small enough and for every $\varphi \in L^2(\R^d)$  
$$
\begin{aligned}
& \sup_{ \x_0 \in \R^d} \sup_{ \x \in \R^d} \left| \int_{ \R^d}  \, \eu^{-\frac{\delta}{\bar{r}} \langle \x -\x_0 \rangle } \left(H+\lambda\right)^{-1} (\x,\x') \eu^{\frac{\delta}{\bar{r}} \langle \x' -\x_0 \rangle } \varphi(\x') \di\x' \right| \\
& \leq \eu^{3 \frac{\delta}{\bar{r}}}\sup_{ \x_0 \in \R^d} \sup_{ \x \in \R^d}  \int_{ \R^d}  \left| \left(H+\lambda\right)^{-1} (\x,\x') \right|  \eu^{ c \sqrt{\lambda} \langle \x' -\x \rangle } |\varphi(\x')| \di\x' \\
&\leq \eu^{3 \frac{\delta}{\bar{r}}} \sup_{ \x_0 \in \R^d}  \sup_{ \x \in \R^d}  \left\|  \left(H+\lambda\right)^{-1} (\x,\,\cdot)   \eu^{ c \sqrt{\lambda} \langle \cdot -\x \rangle } \right\|_{2} \left\|\varphi\right\|_{2}\, .
\end{aligned}
$$
Hence
$$
\begin{aligned}
&\sup_{ \x_0 \in \R^d} \left\| \eu^{-\frac{\delta}{\bar{r}} \langle \cdot -\x_0 \rangle } \left(H+\lambda\right)^{-1}  \eu^{\frac{\delta}{\bar{r}} \langle \cdot -\x_0 \rangle } \right\|_{\mathcal{B}(L^2,L^\infty)} \leq  \eu^{3 \frac{\delta}{\bar{r}}}C  \, .
\end{aligned}
$$
This, together with the $L^2$ bound \eqref{gCT} and the resolvent identity, implies that
\begin{equation}
\label{gzLinfty}
\begin{aligned}
&\sup_{ \x_0 \in \R^d} \left\| \eu^{-\frac{\delta}{\bar{r}} \langle \cdot -\x_0 \rangle } \left(H-z\right)^{-1}  e^{\frac{\delta}{\bar{r}} \langle \cdot -\x_0 \rangle } \right\|_{\mathcal{B}(L^2,L^\infty)}\\
& \leq \sup_{ \x_0 \in \R^d} \left\| \eu^{-\frac{\delta}{\bar{r}} \langle \cdot -\x_0 \rangle } \left(H+\lambda\right)^{-1}  \eu^{\frac{\delta}{\bar{r}} \langle \cdot -\x_0 \rangle } \right\|_{\mathcal{B}(L^2,L^\infty)} \\
&\phantom{\leq} +   \left( |z| +|\lambda|\right) \sup_{ \x_0 \in \R^d}\left\|\eu^{-\frac{\delta}{\bar{r}} \langle \cdot -\x_0 \rangle } \left(H+\lambda\right)^{-1} \eu^{\frac{\delta}{\bar{r}} \langle \cdot -\x_0 \rangle } \right\|_{\mathcal{B}(L^2,L^\infty)} \\
&\phantom{\leq +} \cdot \sup_{ \x_0 \in \R^d} \left\| \eu^{-\frac{\delta}{\bar{r}} \langle \cdot -\x_0 \rangle } \left(H-z\right)^{-1} \eu^{\frac{\delta}{\bar{r}} \langle \cdot -\x_0 \rangle } \right\|_{\mathcal{B}(L^2,L^2)} \\
& \leq C  \eu^{3 \frac{\delta}{\bar{r}}} \frac{\bar{r}}{\eta} \, ,
\end{aligned}
\end{equation}
which shows that $\eu^{-\frac{\delta}{\bar{r}} \langle \cdot -\x_0 \rangle } \left(H-z\right)^{-1}  e^{\frac{\delta}{\bar{r}} \langle \cdot -\x_0 \rangle } $ is also a Carleman operator.
Consider now
$$
\eu^{-\frac{\delta}{\bar{r}} \langle \cdot -\x_0 \rangle }\left(H-z\right)^{-1}\eu^{\frac{\delta}{\bar{r}} \langle \cdot -\x_0 \rangle } = \eu^{-\frac{\delta}{\bar{r}} \langle \cdot -\x_0 \rangle }\left( -\Delta_A + \lambda\right)^{-1}\eu^{\frac{\delta}{\bar{r}} \langle \cdot -\x_0 \rangle } B'_{\x_0},
$$
where we have defined
$$
\begin{aligned}
B'_{\x_0}:=&\left(\eu^{-\frac{\delta}{\bar{r}} \langle \cdot -\x_0 \rangle }\left( -\Delta_A + \lambda\right)\left(H+\lambda\right)^{-1}\eu^{\frac{\delta}{\bar{r}} \langle \cdot -\x_0 \rangle }\right)  \\
&+\left( (z+\lambda)\eu^{-\frac{\delta}{\bar{r}} \langle \cdot -\x_0 \rangle }\left( -\Delta_A + \lambda\right)\left(H+\lambda\right)^{-1}\eu^{\frac{\delta}{\bar{r}} \langle \cdot -\x_0 \rangle } \eu^{-\frac{\delta}{\bar{r}} \langle \cdot -\x_0 \rangle }\left(H-z\right)^{-1}\eu^{\frac{\delta}{\bar{r}} \langle \cdot -\x_0 \rangle } \right).
\end{aligned}
$$
We can repeat the same strategy as before by defining a new linear functional $F'_{\x',\x_0}: L^2(\R^d) \to \C$
$$
F'_{\x',\x_0}(\psi):= \int_{\R^d} \eu^{ -\frac{\delta}{\bar{r}}  \langle{\x' -\x_0}\rangle}\left( -\Delta_A + \lambda\right)^{-1}(\x',\x)      \eu^{ \frac{\delta}{\bar{r}} \langle{\x -\x_0}\rangle} \left(B'_{\x_0}\psi\right)(\x) \di\x .
$$
Thus, we obtain
\begin{equation}
\label{eq:Resolvent2}
\begin{aligned}
&\sup_{ \x,\x' \in \R^d} \left|\eu^{\frac{\delta}{\bar{r}} \|\x-\x'\|}  \left(H-z\right)^{-2}(\x,\x') \right|\\
& \leq (\eu^{2\frac{\delta}{\bar{r}}}) \sup_{\x,\x'\in \R^d} \int_{ \R^d} \eu^{ \frac{\delta}{\bar{r}} \langle \y- \x \rangle} \left| \left(H-z\right)^{-1}(\x,\y ) \right| \left| \left(H-z\right)^{-1}(\y,\x' ) \right| \eu^{\frac{\delta}{\bar{r}}\langle \y- \x' \rangle} \di \y \\
&  \leq (\eu^{2\frac{\delta}{\bar{r}}}) \sup_{\x \in \R^d} \left\|\left(H-z\right)^{-1}(\x,\cdot ) \eu^{ \frac{\delta}{\bar{r}} \langle \cdot - \x \rangle}\right\|_{2} \sup_{\x' \in \R^d} \left\| \eu^{ \frac{\delta}{\bar{r}} \langle \cdot - \x' \rangle}\left(H-z\right)^{-1}(\cdot\,,\x' )\right\|_{2} \\
&\leq    \frac{C_{\delta, \bar{r}}}{{\eta^2}} \, ,
\end{aligned}
\end{equation}
where $C_{\delta, \bar{r}}$ is a finite constant that depends on $\delta$ and $\bar{r}$ (and $\lambda$).

Assume now that $H$ has an isolated component of the spectrum $\sigma_0$, so that we can find a countour $\mathcal{C} \subset K \subset \rho(H)$ encircling $\sigma_0$. The projection $P$ onto $\sigma_0$ can be written using the Riesz formula together with integration by parts as
\begin{equation*}
P = - \frac{\iu}{2 \pi} \oint_{\mathcal{C}} z \left(H-z\right)^{-2} \di z \, ,
\end{equation*}
which together with \eqref{eq:Resolvent2} implies that $P$ is an exponentially localized projection in the sense of Definition \ref{dfn:ELP}, that is
\begin{equation*}
\sup_{ \x,\x' \in \R^d} \left|\eu^{\frac{\delta}{\bar{r}} \|\x-\x'\|} P(\x,\x') \right| \leq C\, .
\end{equation*}


\vfill

{
	
	\begin{tabular}{rl}
		(M. Moscolari) & \textsc{Dipartimento di Matematica, Politecnico di Milano}\\ &Piazza Leonardo da Vinci 32, 20133, Milano, Italy \\
		&  \textsl{E-mail address}: \href{mailto:massimo.moscolari@polimi.it}{\texttt{massimo.moscolari@polimi.it}} \\
		\\
		(G. Panati) & \textsc{Dipartimento di Matematica,  Sapienza Università di Roma}\\
		& Piazzale Aldo Moro 2, 00185 Roma, Italy \\
		&  \textsl{E-mail address}:
		\href{mailto:panati@mat.uniroma1.it}{\texttt{panati@mat.uniroma1.it}} \\
	\end{tabular}
	
}

\end{document}